\newcommand{\alp}{\ensuremath{\alpha}}
\newcommand{\bet}{\ensuremath{\beta}}
\newcommand{\del}{\ensuremath{\delta}}
\newcommand{\tht}{\ensuremath{\theta}}
\newcommand{\ti}{\ensuremath{\tilde{i}}}
\newcommand{\tI}{\ensuremath{\tilde{I}}}
\newcommand{\tJ}{\ensuremath{\tilde{J}}}
\newcommand{\tj}{\ensuremath{\tilde{j}}}
\newcommand{\gam}{\ensuremath{\gamma}}
\newcommand{\lam}{\ensuremath{\lambda}}
\newcommand{\ome}{\ensuremath{\omega}}
\newcommand{\hx}{\ensuremath{u}}
\newcommand{\tn}{\ensuremath{\tilde{n}}}
\newcommand{\foral}{\ensuremath{\quad{\text{for all}\quad}}}
\newcommand{\Omi}{{\ensuremath{\mathcal{O}}}}
\newcommand{\G}{\ensuremath{\mathcal{G} }}						
\newcommand{\C}{{\ensuremath{\mathbb C}}}
\newcommand{\N}{{\ensuremath{\mathbb N}}}
\newcommand{\Z}{{\ensuremath{\mathbb Z}}}
\newcommand{\RA}{\ensuremath{\Rightarrow} }
\newcommand{\ra}{\rightarrow}
\newcommand{\LRA}{\ensuremath{\Leftrightarrow} }
\newcommand{\vx}{{\ensuremath{\mathbf x}}}
\newcommand{\hvx}{{\ensuremath{\mathbf u}}}
\newcommand{\vy}{\ensuremath{\mathbf y}}
\newcommand{\by}{\ensuremath{\tilde{y}}}
\newcommand{\cby}{\ensuremath{\cc{\tilde{y}}}}
\newcommand{\hy}{{\ensuremath{v}}}
\newcommand{\hvy}{{\ensuremath{{\mathbf v}}}}
\newcommand{\va}{{\ensuremath{\mathbf a}}}
\newcommand{\vby}{{\ensuremath{\tilde{\mathbf y}}}}
\newcommand{\zero}{{\ensuremath{\mathbf 0}}}
\newcommand{\tx}{\ensuremath{\tilde{x}}}
\newcommand{\ty}{\ensuremath{\tilde{y}}}
\newcommand{\vtx}{\ensuremath{{\tilde{\mathbf x}}}}
\newcommand{\vty}{\ensuremath{{\tilde{\mathbf y}}}}
\newcommand{\vB}{{\ensuremath{\mathbf B}}}
\newcommand{\vu}{{\ensuremath{\mathbf u}}}
\newcommand{\vv}{{\ensuremath{\mathbf v}}}
\newcommand{\vP}{\ensuremath{\mathbf P}}                         
\newcommand{\vPA}{\ensuremath{\mathbf P}_A}                         
\newcommand\diam{\operatorname{diam}} 
\newcommand{\cred}[1]{{\color{red}{\ensuremath{#1}}}}
\newcommand{\cgre}[1]{{\color{green}{\ensuremath{#1}}}}
\newcommand{\noi}{\noindent}
\newtheorem{definition}{Definition}         
\par\noindent{\em Beweis\/}.}%
\hspace*{\fill}{\qed}\vspace{1ex}\par}
\par\noindent{\em Proof\/}.}%
\hspace*{\fill}{}\vspace{1ex}\par}
\par\vspace{1.5ex}\noindent{\em Remark\/}.}
\par\vspace{1.5ex}}
\par\vspace{1.5ex}\noindent{\em Remark\/}.}
\par\vspace{1.5ex}}
\par\vspace{1.5ex}\noindent{\em Example\/}. }
\par\vspace{1.5ex}}
\noi\vspace{0.5ex}\small}
\newcounter{Examplecount}
\renewcommand{\labelenumi}{(\roman{enumi})}\begin{list}{\labelenumi}
\renewcommand{\labelenumi}{(\arabic{enumi})}\begin{list}{\labelenumi}
\renewcommand{\labelenumi}{$\bullet$}\begin{list}{\labelenumi}
\newcommand{\set}[2]{\ensuremath{%
\setbox0=\hbox{\ensuremath{#2}}
\dimen@\ht0
\advance\dimen@ by \dp0
\left\{\left.#1\rule[-\dp0]{0pt}{\dimen@}\;\right|\;#2\right\} }}
\newcommand{\Norm}[1]{\ensuremath{ \left\|#1\right\| }}
\newcommand{\skprod}[1]{\ensuremath{ \left\langle #1 \right\rangle }}
\newcommand{\cc}[1]{{\ensuremath{\overline{#1}}}} 
\DeclareMathOperator{\supp}{supp}
\newcommand{\namen}[1]{{\textsc{#1}}}           
\renewcommand\paragraph{\@startsection
{paragraph}{4}{\z@}{-3.5ex plus-1ex minus-.2ex}%
{1.3ex plus.2ex}{\normalfont\itshape}}
\definecolor{gray}{rgb}{0.3,0.3,0.3}
{\color{black}}                      
{\color{black}}
\newtheorem*{theoremno}{Theorem} 
\newtheorem*{lemmano}{Lemma} 
\begin{document}

\onecolumn
\renewcommand{\tn}{n}

\title{A Stability Result for Sparse Convolutions}
\twoauthors{Philipp Walk}{
  Institute of Theoretical Information\\
  Technology, Technical University\\
  of Munich (TUM)\\
  \emph{\small philipp.walk@tum.de}}
{Peter Jung}{
  Heinrich-Hertz-Chair for Information Theory\\
  and Theoretical Information Technology\\
  Technical University Berlin (TUB)\\
  \emph{\small peter.jung@mk.tu-berlin.de}
}

\maketitle

\begin{abstract}
We will establish in this note a stability result for sparse convolutions on torsion-free additive (discrete) abelian groups. 
Sparse convolutions on torsion-free groups are free of cancellations and hence admit stability,
i.e. injectivity with a universal lower bound $\alpha=\alpha(s,f)$, only depending on the cardinality 
$s$ and $f$ of the supports of both input sequences. More precisely, we show that $\alpha$ depends only
on $s$ and $f$ and not on the ambient dimension. This statement
follows from a reduction argument which involves a compression into a small set preserving 
the additive structure of the supports.
\end{abstract}

\section{Introduction}

In this work, we will prove a $\ell^2-$norm inequality for $(s,f)-$sparse convolutions on $\ell^2_s (G)
\times \ell^2_f (G)$ for one--dimensional abelian torsion-free discrete groups $G=(G,+,|\cdot|)$ equipped with the
counting measure $|\cdot|$.
We define for a natural number $k$ the set of \emph{$k$--sparse sequences} in $\ell^2(G)$:
\begin{align}
  \ell^2_k (G):=\set{\vx:G\to \C}{\Norm{\vx}^2:=\sum_{i\in G} |x_i|^2 < \infty, |\supp \vx | \leq k}.
\end{align}
Then for two  $(s,f)-$sparse sequences $\vx\in \ell^2_s (G)$ and $\vy\in\ell^2_f (G)$ its \emph{convolution} $\vx * \vy$ 
is given by the sequence with elements:
\begin{align}
  (\vx * \vy)_{j}= \sum_{i\in G} x_i y_{j-i} \foral j \in G \label{eq:defconv}
\end{align}
each being a finite sum.  Let us define the set of \emph{$k-$sparse vectors} in $\C^n$ by $\Sigma^n_k$ and the set of all
support sets with cardinality $k$ by $[0,n-1]_k:=\set{A\subset \{0,1,\dots,n-1\}}{|A|=k}$. 
For any $A\in [0,n-1]_k$, the
projection operator $\vPA :\C^n \to \C^k$ cuts out from the $n\times n-$matrix $\vB$ an $k\times k$ principal submatrix
$\vB^A=\vPA \vB \vPA^*$.  Further, we denote by $\vB_\va$ an $n\times n-$ \emph{Hermitian Toeplitz matrix}  generated by
the autocorrelation $b_k(\va)=\sum_{l} \cc{a_l} {a_{l+k}}$ of $\va\in \Sigma^n_k$ with \emph{symbol} 
\begin{align}
  \begin{split}
   b (\va,\ome)=  \sum_{k=-(\tn-1)}^{\tn-1} b_k(\va) e^{\imath k\ome}
       = 1 + \sum_{k=1}^{\tn-1}\left( \mu_k \cos(k\ome) + \nu_k \sin(k\ome)\right)\\
       \text{with}\quad \mu_k:=2\Re(b_k(\va))\quad \text{and}\quad \nu_k:=-2\Im(b_k(\va)),
    \end{split} \label{eq:bsymbol3}\quad,\quad\ome\in[0,2\pi) 
 \end{align}
 which defines a \emph{trigonometric polynomial of order not larger than $n$}. 
 %

\section{Main Result and Proof}\label{sec:mainresult}
%
The following theorem is a generalization of a result in \cite{WJ13a}, (i) in the sense of the extension to infinite
sequences on torsion-free abelian groups (note if one adds consecutive $n-1$ zeros to  sparse vectors in $\Sigma_k^n$ the circular
convolution in $\C^{2n-1}$ can be written  with $G=(\Z/(2n-1)\Z],\oplus)$ by \eqref{eq:defconv},  since the addition modulo
$\oplus$ (modulo $2n-1$) equals then the regular addition $+$
(ii) extension to the complex case, which actually only replaces \namen{Szeg{\"o}} factorization with \namen{Fejer-Riesz} factorization
in the proof and (iii) with a precise determination of the dimension parameter $\tn$\footnote{Actually, the estimate of
the dimension $\tn=\tilde{n}$ of the constant
$\alp_{\tilde{n}}$ in \cite{WJ12b}, was quite too optimistic.}.
%
\begin{theoremno}\label{thm:dryi}
   Let $s$ and $f$ be natural numbers and $G$ a one--dimensional torsion-free, discrete, additive abelian group. Then there exist
   constants $0<\alp(s,f)\leq\beta(s,f)=\sqrt{\min\{s,f\}}<\infty$ depending solely on $s$ and $f$,
   s.t. for all $\vx\in\ell^2_s (G)$ and $\vy\in \ell^2_f (G)$ 
   \begin{align}
      \alp(s,f) \Norm{\vx}\Norm{\vy} \leq \Norm{\vx * \vy} \leq \beta(s,f)\Norm{\vx}\Norm{\vy} \label{eq:dryi}
   \end{align}
  holds.
  Moreover, we have with $n=\lfloor 2^{2(s+f-2)\log_2 (s+f-2)}\rfloor+1:$
  \begin{align}
    \alp^2 (s,f)=\min\big\{
    \min_{\substack{\vty\in \Sigma^{n}_f,\Norm{\vty}=1\\ I\in [0,n-1]_s}}
    \lam(\vB^I_\vty),
    \min_{\substack{\vtx\in \Sigma^{n}_s,\Norm{\vtx}=1\\ J\in [0,n-1]_f}}
    \lam(\vB^J_\vtx)\big\},
  \end{align}
  which is a decreasing sequence in $s$ and $f$. For $\beta(s,f)=1$ we get equality with $\alp(s,f)=1$. 
\end{theoremno}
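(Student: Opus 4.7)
The plan is to split \eqref{eq:dryi} into its upper and lower bounds and, for the lower bound, reduce from the infinite torsion-free group $G$ to the finite ambient already handled in \cite{WJ13a}. The upper bound is the routine part: for every $j\in G$, the convolution sum $(\vx*\vy)_j=\sum_i x_i y_{j-i}$ has at most $\min\{s,f\}$ nonzero summands, since one factor forces $i\in\supp\vx$ and the other forces $j-i\in\supp\vy$. Applying Cauchy--Schwarz in each $j$ and then summing in $j$ yields $\Norm{\vx*\vy}^2 \leq \min\{s,f\}\Norm{\vx}^2\Norm{\vy}^2$. Equality $\bet(s,f)=1$ forces $\min\{s,f\}=1$, in which case one of the factors is a single spike and $\vx*\vy$ is merely a shift of the other; hence $\alp(s,f)=1$ as well.

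For the lower bound I first recast $\Norm{\vx*\vy}^2$ as a quadratic form in $\vx$. A direct expansion gives
\begin{align}
   \Norm{\vx*\vy}^2 = \langle \vx,\, \vB_\vy \vx\rangle,
\end{align}
where $\vB_\vy$ is the Hermitian Toeplitz operator on $\ell^2(G)$ with $(i,i')$-entry equal to the autocorrelation $\sum_l y_l\, \cc{y_{l-(i-i')}}$. Because $\vx$ is $s$-sparse with support $I=\supp\vx$, only the $s\times s$ principal submatrix $\vB^I_\vy = \vPA \vB_\vy \vPA^*$ contributes, so $\Norm{\vx*\vy}^2 \geq \lam_{\min}(\vB^I_\vy)\Norm{\vx}^2$. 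Swapping the roles of $\vx$ and $\vy$ gives the symmetric bound in terms of $\lam_{\min}(\vB^J_\vx)$ with $J=\supp\vy$; taking the smaller of the two yields the desired formula for $\alp^2(s,f)$, once the eigenvalue is shown to be bounded below uniformly in the ambient group.

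The crucial step is a compression argument. Since $G$ is torsion-free and one-dimensional, I identify the finite set $S=\supp\vx\cup\supp\vy$ with a subset of $\Z$ of cardinality at most $s+f$. The matrix $\vB^I_\vy$ depends on $S$ only through its Freiman-$2$ type, i.e.\ through the system of equations $a-b=c-d$ with $a,b,c,d\in S$. Hence any Freiman-$2$-isomorphism $\phi\colon S\hookrightarrow\{0,\dots,n-1\}$ transports $(\vx,\vy)$ to $(\vtx,\vty)\in\Sigma^n_s\times\Sigma^n_f$ preserving all three norms $\Norm{\vx}$, $\Norm{\vy}$ and $\Norm{\vx*\vy}$. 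By the Freiman--Ruzsa modeling lemma such a $\phi$ exists with $n\leq \lfloor 2^{2(s+f-2)\log_2(s+f-2)}\rfloor+1$, obtained by reducing $S$ modulo a suitable prime and dilating so that $\phi(S)$ collapses into a short interval without generating spurious additive coincidences.

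Once the problem is transported to $\C^n$, I invoke the finite version of \cite{WJ13a}, replacing Szeg\H{o} factorization there by Fejer--Riesz factorization of the nonnegative trigonometric polynomial $b(\va,\ome)$ of \eqref{eq:bsymbol3} in order to cover the complex case. The minimum defining $\alp^2(s,f)$ is attained because the smallest eigenvalue is continuous on each coordinate unit sphere and the index family $[0,n-1]_s\times \Sigma^n_f$ (resp.\ $[0,n-1]_f\times \Sigma^n_s$) is a finite union of compact sets. Monotonicity in $s$ and $f$ follows from the nesting $\Sigma^n_k\subset\Sigma^{n'}_{k+1}$, which only enlarges the candidate set over which the minimum is taken. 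The main obstacle is the quantitative Freiman compression with the precise exponent $2^{2(s+f-2)\log_2(s+f-2)}$: running the modular-reduction and dilation argument sharply enough to preserve every additive relation in $S-S$ while shrinking the diameter to this doubly-exponential scale requires a careful pigeonhole over primes and dilation parameters.
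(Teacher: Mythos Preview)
Your overall strategy matches the paper's: upper bound via Cauchy--Schwarz, rewriting $\Norm{\vx*\vy}^2$ as a Hermitian Toeplitz quadratic form, Freiman compression of the union of supports into $[0,n-1]$, and then positivity of the smallest eigenvalue in the finite model. The difference, and the genuine gap, is in the compression step---precisely the point you yourself flag as ``the main obstacle.'' The paper does \emph{not} use the Freiman--Ruzsa modeling lemma. Reducing modulo a prime and dilating gives a Freiman-isomorphic copy inside a cyclic group $\Z/p\Z$, but lifting that copy back to a short interval of $\Z$ while preserving all relations $a+b=c+d$ in $S=I\cup J$ is a separate and harder problem, and the standard Ruzsa argument does not deliver the diameter bound $\lfloor 2^{2(s+f-2)\log_2(s+f-2)}\rfloor$. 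Instead the paper invokes a recent theorem of Grynkiewicz \cite[Theorem~20.10]{Gry13}, which settled a conjecture of Konyagin and Lev \cite{KL00}: for finite $A_1,A_2$ in a torsion-free abelian group with $m=|A_1\cup A_2|$ and Freiman dimension $d=\dim^+(A_1+A_2)$, there is an injective Freiman homomorphism $\phi:A_1+A_2\to\Z$ with $\diam(\phi(A_i))\leq d!^2(3/2)^{d-1}2^{m-2}+(3^{d-1}-1)/2$. Applying this with $A_1=A_2=I\cup J$ (after translating so $0\in I\cap J$), $m\leq s+f-1$, and the dimension bound $d\leq m-2$ from \cite[Cor.~5.42]{Tao06} produces exactly the stated $n$. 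So the pigeonhole-over-primes route you sketch is not how the bound is obtained; Grynkiewicz's result is the missing ingredient.

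A secondary difference: rather than deferring the finite case entirely to \cite{WJ13a}, the paper re-establishes strict positivity of $\lam(\vB_\vty)$ in the complex setting directly. Fejer--Riesz gives $b(\vty,\cdot)\geq 0$, hence $\lam(\vB_\vty)>0$; then the explicit inequality $\lam(\vB_\vty)\geq |\det\vB_\vty|\,/\,\big(\sqrt{n}\,(\sum_k|b_k|^2)^{(n-1)/2}\big)$ together with continuity of the determinant on the unit sphere yields a uniform lower bound $d_n>0$, and Cauchy's interlacing theorem passes this to the principal submatrices $\vB_\vty^{\tI}$. Your compactness remark covers the attainment of the minimum but not this quantitative positivity, which is what actually shows $\alp(s,f)>0$.
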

%
\begin{proof}
  The upper bound is trivial and follows from the \namen{Cauchy-Schwartz} inequality and the \namen{Young} inequality
  for $p=1,q=r=2$. For $\vx=0$ or $\vy=0$ the inequality is trivial as well, hence we assume that $\vx$ and $\vy$ are
  non-zero. If $|\supp \vx|=1$ then there exist $i\in G$ sucht that $x_i\not=0$ and $x_j=0$ for all $i\not=j\in \G$.
  The norm of the convolution equals then $\Norm{\vx*\vy}=\Norm{x_i\vy}$ and the inequality \eqref{eq:dryi} becomes an equality.\\
  We consider therefore the normalized version of the convolution for $s,f\geq 2$, i.e. the following problem: 
  \begin{equation}
    \alp(s,f):=\inf_{\substack{\zero\not=\vx\in \ell^2_s (G)\\ \zero\not=\vy\in \ell^2_f (G)}} \frac{\Norm{\vx * \vy}}{\Norm{\vx}\Norm{\vy}}=
    \inf_{\substack{\vx\in \ell^2_s (G), \vy\in \ell^2_f (G)\\ \Norm{\vx}=\Norm{\vy}=1}} \Norm{\vx * \vy}\label{eq:pb}.
  \end{equation}
  This is a \emph{bi-quadratic optimization problem} which is known to be NP-hard in the general case \cite{LNQY09}.
  The squared norm of the convolution of two finitely supported sequences is given by \eqref{eq:defconv} as:
  \begin{align}
   \Norm{\vx*\vy}^2=\sum_{j\in G} \left|\sum_{i\in G} x_{i} y_{j-i}\right|^2.\label{eq:normsquarestart}
  \end{align}
  Let $I$ and $J$ be sets of $G$ such that
  $\supp\vx\subseteq I$ and $\supp\vx\subseteq J$ with $|I|= s,|J|= f$ for some
  $2\leq s,f\in \N$.  For such $I,J\subset G$ with
  $I=\{i_0,\dots,i_{s-1}\}$ and $J=\{j_0,\dots,j_{f-1}\}$ (ordered sets) we can represent $\vx$ and $\vy$ by 
  complex vectors $\hvx \in \C^s$ and $\hvy \in \C^f$ component-wise given by:
  \begin{align}
   x_i = \sum_{\tht=0}^{s-1} \hx_\tht \del_{i,i_\tht}\quad,\quad  y_j = \sum_{\gam=0}^{f-1} \hy_\gam \del_{j,j_\gam}
   \foral i,j\in G.
  \end{align}
  Inserting this representation in \eqref{eq:normsquarestart} yields: 
  \begin{align}
    \Norm{\vx*\vy}^2&=\sum_{j\in G}\Bigg|\sum_{i\in G}\left( \sum_{\tht=0}^{s-1} \hx_{\tht}\del_{i,i_{\tht}}\right)
    \left( \sum_{\gam=0}^{f-1} \hy_\gam \del_{j-i,j_\gam}\right)\Bigg|^2\\
    &= \sum_{j\in G} \Bigg| \sum_{\tht=0}^{s-1} \sum_{\gam=0}^{f-1} \left( \sum_{i\in G} \hx_{\tht}\del_{i,i_{\tht}} 
     \hy_\gam \del_{j,j_\gam +i}\right)\Bigg|^2.\label{eq:ijsum}
  \intertext{Since the inner $i-$sum is over  $G$, we can shift $I$ by $i_0$ if we set $i\ra i+i_0$ (note that $\vx\not=0$), 
     without changing the value of the sum: }
   &= \sum_{j\in G} \Bigg| \sum_\tht \sum_\gam 
   \left( \sum_{i\in G} \hx_{\tht} \del_{i+{i_0},i_{\tht}} \hy_\gam \del_{j,j_\gam +i + i_0}\right)\Bigg|^2.
   \intertext{By the same argument we can shift $J$ by setting $j\ra j +i_0 +j_0$ and get:}
   &= \sum_{j\in G} \Bigg| \sum_\tht \sum_\gam 
   \left( \sum_{i\in G} \hx_{\tht} \del_{i,i_{\tht}-i_0} \hy_\gam \del_{j,j_\gam -j_0 + i}\right)\Bigg|^2.
  \intertext{Therefore, we always can assume that the supports $I,J\subset G$ have $i_0=j_0=0$ in $\eqref{eq:pb}$. From \eqref{eq:ijsum} we get:}
   &= \sum_{j\in G} \Bigg| \sum_\tht \sum_\gam \left( \hx_{\tht} \hy_\gam \del_{j,j_\gam +{i_\tht}} \right)\Bigg|^2 \\
   &= \sum_{j\in G} \sum_{\tht,\tht'}\sum_{\gam,\gam'}  \hx_{\tht} \cc{\hx_{\tht'}} \hy_\gam \cc{\hy_{\gam'}} 
   \del_{j,j_{\gam} +i_{\tht}} \del_{j,j_{\gam'} +i_{\tht'}}=:b_{I,J}(\vu,\vv)\label{eq:bijhxhy}.
  \end{align}
Usually, fourth order tensors like $\del_{i_\tht+j_{\gam},i_{\tht'} +j_{\gam'}}$ make such bi-quadratic optimization
problems over $\C^s \times \C^f$ NP-hard, see \cite{LNQY09}.  The interesting question is now: what is the smallest
dimension to represent this tensor, i.e. preserving the additive structure?  Let us consider a mapping $\phi$ of the
indices.  For $I,J\subset G$ with $0\in I\cap J$ an injective map:
\begin{align}
 \phi:I +J \to \Z
\end{align}
which additional satisfies (preserves additive structure of the indices):
\begin{align}
  \forall i,i'\in I, j,j'\in J : i+j=i' + j' \RA \phi(i) + \phi(j) = \phi(i') + \phi(j') \label{eq:f2isodef}
\end{align}
is called a \emph{Freiman homomorphism on $I,J$} and is a \emph{Freiman isomorphism} if:
\begin{align}
  \forall i,i'\in I, j,j'\in J : i+j=i' + j' \LRA \phi(i) + \phi(j) = \phi(i') + \phi(j') \label{eq:f2isodef2},
\end{align}
see e.g. \cite[pp.299]{Gry13}.  If we could show that $\phi(I),\phi(J)\subset [0, n-1]=\{0,1,\dots,n-1\}$, where
$n=n(s,f)$, for any $I,J\subset G$ with $|I|=s,|J|=f$ the minimization problem reduces to an $n$--dimensional problem.
Indeed, this was a conjecture by \namen{Konyagin} and \namen{Lev} \cite{KL00}, which was proved very recently by
\namen{Grynkiewicz} in \cite[Theorem 20.10]{Gry13} for Freiman dimension $d=1$. He could even prove a  more generalized
compression argument of arbitrary sum sets with finite Freiman dimension $d$ in torsion-free abelian groups. We will
state here a restricted version of his result for additive abelian groups with  two sets $A_1$ and $A_2$:
\begin{lemmano}[\cite{Gry13}]\label{thm:gry2}
  Let $G$ be a  torsion-free additive abelian group and  $A_1,A_2\subset G$ be finite sets containing zero with
  $m:=|A_1\cup A_2|$ and having finite Freiman dimension $d=\dim^+(A_1+A_2)$. Then there exists an injective Freiman
  homomorphism: 
  \begin{align}
     \phi: A_1 + A_2 \to \Z
  \end{align}
  such that 
  \begin{align}
    \diam(\phi(A_1)),\diam(\phi(A_2))\leq d!^2\left( \frac{3}{2}\right)^{d-1}\cdot 2^{m-2} + \frac{3^{d-1}-1}{2}.
  \end{align}
\end{lemmano}
For simplicity we have restricted our statements here solely on discrete (countable) groups.  Thus, for
$A_1=A_2=A:=I\cup J$ we get $|A\cup A|=|A|\leq m$. By a simple upper bound for $d$ in \cite[Corollary 5.42]{Tao06} we
get $d\leq m-2$ for $m\geq $ and by Grynkewiecz a bijective Freiman homomorphism on $A+A$, which is a Freiman isomorphism on $A$ with 
\begin{align}
  \diam(\phi(A)) \leq (m-2)!^2 \left(\frac{3}{2}\right)^{m-3}2^{m-2} + \frac{3^{m-3} -1}{2} < \lfloor 2^{2(m-1)\log_2(m-1)}\rfloor.
\end{align}
Since $|I|=s, |J|=f$ with $s,f\geq 2$ and $0\in I\cap J$ we always have $3\leq m\leq s+f-1$. Then $\diam(\phi(A)):=\max(\phi(A))
-\min(\phi(A))$ and hence $\diam(\phi(I)\cup \phi(J))\leq n$ with
\begin{align}
  n:=\lfloor 2^{2(s+f-2)\log_2 (s+f-2)} \rfloor +1 \label{eq:n}.
\end{align}
Let us now define the minimum in the image:
\begin{align}
  c^*:=\min_{c\in I\cup J} \phi(c).
\end{align}
Then we can translate the Freiman isomorphism by setting $\phi'=\phi-c^*$ (still satisfy \eqref{eq:f2isodef}) and define 
$\tI:=\phi'(I)$ and $\tJ:=\phi'(J)$. With $n$ in \eqref{eq:n} we have:
\begin{align}
  0\in\tI\cup \tJ\subset\{0,1,2,\dots ,n-1\}=[0,\tn-1].
\end{align}
Unfortunately, a Freiman isomorphism does not necessarily preserve the order. 
However, this is not a problem, since we only need is to know, that indices not larger than $\tn-1$ are needed to express
the combinatorics of the convolution, i.e. \eqref{eq:bijhxhy} reads now as:
\begin{align}
  b_{I,J} (\hvx,\hvy)
  &=\sum_{\tht,\tht'} \sum_{\gam,\gam'} 
  \hx_\tht \cc{\hx_{\tht'}} \hy_{\gam} \cc{\hy_{\gam'}} \del_{\ti_{\tht} +\tj_\gam,\tj_{\gam'} +\ti_{\tht'}}\label{eq:pipjhxhy}.
\end{align}
and the norm of the convolution is indeed reduced to $n$ dimensions. Next, we can define the embedding of $\hvx,\hvy$ into $\C^{n}$ by:
\begin{align}
  \tx_i = \sum_{\tht=0}^{s-1} \hx_{\tht} \del_{i,{\ti_{\tht}}} \ ,\ 
  \ty_j = \sum_{\gam=0}^{f-1} \hy_{\gam} \del_{j,{\tj_{\gam}}} \foral i,j \in [0,\tn-1].\label{eq:xytn}
\end{align}
Since for all $\tht$ and $\gam$ there exist unique $\ti_\tht\in \tI$ resp. $\tj_\gam\in \tJ$ ($\phi'$ is bijective) we get: 
 \begin{align}
    \hx_\tht = \sum_{i=0}^{\tn-1} \tx_{i} \del_{i,\ti_\tht} \quad,\quad\hy_\gam = \sum_{j=0}^{\tn-1} \ty_{j} \del_{j,\tj_\gam},
 \end{align}
 and inserting this into \eqref{eq:pipjhxhy} yields:
 \begin{align}
    b_{I,J} (\hvx,\hvy)
    &=\sum_{\tht,\tht'} \sum_{i,i'=0}^{\tn-1}  \hx_{\tht}\cc{\hx_{\tht'}} \del_{i,\ti_\tht}\del_{i',\ti_{\tht'}} 
    \sum_{\gam,\gam'} \sum_{j,j'=0}^{\tn-1}  \hy_{{\gam}} \cc{\hy_{\gam'}}
    \del_{j,\tj_{\gam}}\del_{j',\tj_{\gam'}} \del_{\tj_{\gam} +(\ti_{\tht}-\ti_{\tht'}),\tj_{\gam'}}\\
    \text{\eqref{eq:xytn}}\ra &=\sum_{i,i'=0}^{\tn-1} \tx_{i}\cc{\tx_{i'}}  \sum_{j,j'=0}^{\tn-1}  \ty_{j} \cc{\ty_{j'}} \del_{j+(i-i'),j'}\\
    &=\sum_{i,i'=0}^{\tn-1}  \tx_{i}\cc{\tx_{i'}}  \underbrace{\sum_{j=0}^{\tn-1}  \ty_{j} 
      \cc{\ty_{j +(i-i')}}}_{=:\cc{(\vB_\vty)_{i',i}}}=\skprod{\vtx,\vB_\vty \vtx}\label{eq:skprodvxvy},
 \end{align}
 where $\vB_\vty$ is a $n\times n$ Hermitian Toeplitz matrix with first row $(\vB_\vty)_{0,k}=\sum_{j=0}^{n-k}
 \cc{\ty_{j}} {\ty_{j+k}}=:b_k(\vty)$ resp. first column $(\vB_\vty)_{k,0}=:b_{-k}$ for $k\in [0,n-1]$ and \emph{symbol}
 $b(\vty,\ome)$ given by \eqref{eq:bsymbol3}, see e.g. \cite{BG05a}. We call $b(\vty,\ome)$ for each $\vty\in \C^{\tn}$
 with $\nu_0=\Norm{\vty}=1$ a \emph{normalized  trigonometric polynomial of order $\tn-1$}. Minimizing the scalar product
 in \eqref{eq:skprodvxvy} over all $\vtx\in \C^{\tn}$ with $\Norm{\vtx}=1$ defines the smallest eigenvalue of
 $\vB_{\vty}$:
 \begin{align}
    \lam(\vB_\vty):=\min_{\vtx\in \C^{\tn}, \Norm{\vtx}=1} \skprod{\vtx,\vB_\vty \vtx}.
 \end{align}
 By the well-known \namen{Fejer-Riesz} Factorization, see e.g. \cite[Thm.3]{Dim04}, we know that the symbol of
 $\vB_{\vby}$  is \emph{non-negative}
 \footnote{Note, there exist $\vby\in\C^{n}$ with $\Norm{\vby}=1$ and $b(\vby,\ome)=0$ for some
   $\ome\in[0,2\pi)$. Thats the reason why things are more complicated here. Moreover, we want to find a universal lower
   bound over all $\vby$, which is equivalent to a universal lower bound over all non-negative trigonometric polynomials of
   order $n-1$. By the best knowledge of the authors, there exist no analytic lower bound for $\alp(s,f)$.}
 for every $\vby\in\C^{n}$, i.e.  $0\leq\min_{\ome}b(\vby,\ome)$. By \cite[(10.2)]{BG05a}
 we then have $\lam(\vB_{\vby})>0$. Hence $\vB_{\vby}$ is invertible and the \emph{determinant}
 $\det(\vB_{\vby})\not=0$.  Using:
 \begin{align}
    \frac{1}{\lam(\vB_{\vby})} = \Norm{\vB^{-1}_{\vby}}
 \end{align}
 in \cite[p.59]{BG05a}, we can estimate the smallest eigenvalue (singular value) with \cite[Thm. 4.2]{BG05a} by the
 determinant as:
 \begin{align}
    \lam(\vB_{\vby})\geq |\det(\vB_{\vby})| \frac{1}{\sqrt{\tn} (\sum_k |b_k(\vby)|^2)^{(n-1)/2}}.
    \label{eq:eigdetbound}
 \end{align}
 In the following we will not further explicitely account for the sparsity of $\vby$ which may improve
 the next steps. For our purpose it will be sufficient here to show a non-zero lower bound.
 The $\ell^2$--norm of the sequence $b_k(\vby)$ can be upper bounded for $n>1$ by the \namen{Cauchy-Schwartz}
 inequality (instead one may also utilize the upper bound of the theorem):
 \begin{align}
    \sum_k |b_k(\vby)|^2 &\leq 1 + 2\sum_{k=1}^{n-1}|\sum_{j=0}^{n-1} \by_j \cby_{j+k}|^2
    \leq 1 +2 \sum_{k=1}^{n-1} \Norm{\vby}^4 = 1+2(\tn-1)< 2n,
 \end{align}
 which is independent of $\vby\in \C^{\tn}$ with $\Norm{\vty}=1$!  Since the determinant is a continuous function in $\vby$ over a
 compact set, the minimum is attained and is denoted by $0<d_{n}:=\min_{\vby} |\det(\vB_{\vby})|$. Note, that
 $d_{n}$ is a decreasing sequence, since we extend the minimum to a larger set by increasing $\tn$. Hence we get:
 \begin{align}
    \min_{\vby\in \C^{\tn},\Norm{\vty}=1}\left(
      |\det(\vB_{\vby})|\frac{1}{\sqrt{n}(2 n)^{(n-1)/2}}\right)=\frac{\sqrt{2}}{(2 n)^{\tn/2}} d_{n}.
 \end{align}
 This is a valid lower bound by \eqref{eq:eigdetbound} for the smallest eigenvalue of all  $\vB_{\vby}$.
 Hence we have shown
 \begin{align}
    \min_{\vty\in\C^{\tn},\Norm{\vty}=1} \lam(\vB_\vty) >\sqrt{2} {(2 n)}^{-\frac{n}{2}} d_{\tn}>0.\label{eq:Pquadratic}
 \end{align}
 Now, bringing the support back into play, we see that $\vtx$ and $\vty$ are fully realized by the Freiman isomorphism
 as $\tI=\phi'(I),\tJ=\phi'(J)$, where $\vtx$ cuts out (in a symmetrical way) for a fixed $\vty\in \C^{\tn}$ an $s\times
 s$ Hermitian matrix $\vB^{\tI}_{\vty}=\vP_{\tI}\vB_{\vty} \vP_{\tI}^*$ (principal submatrix, actually also Toeplitz)
 given by the green elements (here we have re-ordered $I$ such that $\tI$ is ordered) 
 \begin{align}
    \vB_\vty:=  \left(\begin{array}{lllllllll}
         b_0&\cdots &\cred{b_{\ti_0}} &\cdots&\cred{b_{\ti_1}} & \cdots & \cred{b_{\ti_{s-1}}} & \cdots & b_{\tn-1}\\
         \vdots   &\diagdown & \vdots     & & \vdots  &    & \vdots &  &\vdots \\
         \cred{\cc{b_{\ti_0}}} &  \cdots & \cgre{b_{\ti_0-\ti_0}} & \cdots& \cgre{b_{\ti_1-\ti_0}} & \cdots &\cgre{b_{\ti_{s-1}-\ti_0}}&
         \cdots & \cred{b_{\tn-1-\ti_0}} \\
         \vdots   &   &  \vdots       & \diagdown& \vdots  &  & \vdots &  &\vdots \\
         \cred{\cc{b_{\ti_1}}}&\cdots &\cgre{\cc{b_{\ti_1-\ti_0}}} &\cdots& \cgre{{b_{\ti_1-\ti_1}}} & \cdots
         &\cgre{b_{\ti_{s-1}-\ti_1}} & \cdots  & \cred{b_{\tn-\ti_1}} \\
         \vdots   &   & \vdots        & & \vdots  & \diagdown & \vdots & & \vdots \\
         \cred{\cc{b_{\ti_{s-1}}}} & \cdots & \cgre{\cc{b_{\ti_{s-1}-\ti_0}}}  & \cdots& \cgre{\cc{b_{\ti_1-\ti_{s-1}}}}& \cdots
         &\cgre{b_{ \ti_{s-1}-\ti_{s-1}}} &\cdots & \cred{b_{\tn-\ti_{s-1}}} \\
         \vdots   &   &  \vdots     &  & \vdots  &  &\vdots & \diagdown & \vdots \\
         \cc{b_{\tn-1}} & \cdots &\cred{\cc{b_{{\tn-1}-\ti_0}}} & \cdots& \cred{\cc{b_{\tn-1-\ti_1}}} &\cdots &
         \cred{\cc{b_{\tn-1-\ti_{s-1}}}}&  \cdots & b_{0} \\
      \end{array}\right).
 \end{align}
 Minimizing over all $\hvx\in\C^{s}$ we have by \namen{Cauchy}'s
 Interlacing Theorem, see e.g. \cite[Prop.9.19]{BG05a}, for all $s\leq \tn\in\N$
 \begin{align}
    \lam(\vB_\vty^{\tI}) \geq \lam(\vB_\vty)>0 \quad,\quad \vty\in \C^{\tn}, \tI\in [\tn]_s. 
 \end{align}
 Hence, this also holds for $\vty\in \Sigma_{f}^{\tn}$  and we get for our
 problem in \eqref{eq:pb}
 \begin{align}
    \alp^2(s,f)=\inf_{\substack{\vx\in \ell^2_s, \vy\in \ell^2_f\\ \Norm{\vx}=\Norm{\vy}=1}} 
    \Norm{\vx * \vy}
     \geq \min\left\{\min_{{\tI\in[0,\tn-1]_s}}\min_{\substack{\vty\in \Sigma^{\tn}_{f}\\\Norm{\vty}=1}} \lam(\vB^{\tI}_{\vty}) 
     ,\min_{{\tJ\in[0,\tn-1]_f}}\min_{\substack{\vtx\in \Sigma^{\tn}_{s}\\\Norm{\vtx}=1}} \lam(\vB^{\tI}_{\vty})\right\} 
     \geq   \min_{\substack{\va\in\Sigma_{\bet^2(s,f)}^{\tn}\\\Norm{\va}=1}}\lam(\vB_{\va})
   \geq \min_{\substack{\va\in\C^n\\\Norm{\va}=1}}\lam(\vB_{\va}) 
    =:\alp^2_n\label{eq:bys2}.
 \end{align}
 Unfortunately, the combinatoric can only be removed by using the \namen{Cauchy} Interlacing theorem, which obtains only
 a lower bound $\alp_n$ for $\alp(s,f)$. Moreover, the lower bound attained in by the double minimum may still be to
 large: First, $n$ may to large and even if $n$ is the right dimension for the Freiman isomorphism, there are not all
 $\tI\in[0,n-1]_s$ resp.  $\tJ\in[0,n-1]_f$ needed to represent the convolution.
\end{proof}

\section{Conclusion}

There are several applications for this inequality. For example, in \cite{WJ13c} the authors 
have shown a statement for stable phase retrieval from magnitude of $4n-3$ symmetrized Fourier measurements. 
The stability result is independent of the ambient dimension in the regime of $\Omi((s+f-2)\log_2(s+f-2))<\log_2 n$. 
Furthermore, tools from spectral theory of Toeplitz matrices may be used to obtain more precise estimates
for lower bound of the smallest eigenvalue $\alp^2(s,f)$.

\newpage
\section{References}
\vspace{-1cm}
\printbibliography

\end{document}